\documentclass[10pt,twocolumn,twoside]{IEEEtran}
\IEEEoverridecommandlockouts

\usepackage{cite}
\usepackage{amsmath,amssymb,amsfonts}
\usepackage{graphicx}
\usepackage{color, soul}
\usepackage{textcomp}
\usepackage{balance}
\usepackage{xcolor}
\usepackage[utf8]{inputenc}
\usepackage{subcaption}
\usepackage{units}
\usepackage{enumitem}

\usepackage{algorithmic}
\usepackage{setspace}
\usepackage{url}
\usepackage[font=small]{caption}

\def\BibTeX{{\rm B\kern-.05em{\sc i\kern-.025em b}\kern-.08em
    T\kern-.1667em\lower.7ex\hbox{E}\kern-.125emX}}
\usepackage{lipsum}

\usepackage{framed}
 {\endMakeFramed}
\definecolor{shadecolor}{gray}{0.8}

\usepackage{empheq}
\definecolor{shadecolor}{gray}{0.8}

\title{\LARGE \bf 
Integrated Investment and Policy Planning for Power Systems via
Differentiable Scenario Generation
}

\author{Robert Mieth$^\dagger$
\vspace{-0.5cm}
 \thanks{ 
$^\dagger$RM is with the Department of Industrial and Systems Engineering at Rutgers University, New Brunswick, USA. {\tt robert.mieth@rutgers.edu}}
}

\usepackage{amsmath,amsfonts,amsthm,amssymb}
\usepackage{mathtools}
\usepackage[mathscr]{euscript}
\usepackage[bb=boondox]{mathalfa}
\usepackage[algoruled]{algorithm2e}
\usepackage{array}
\usepackage{mdframed}
\usepackage{bm}

\usepackage[noabbrev,capitalize]{cleveref}

\crefname{equation}{}{}
\Crefname{equation}{}{}
\crefname{section}{}{}
\Crefname{section}{}{}

\theoremstyle{definition} 

\theoremstyle{plain} 
\newtheorem{proposition}{Proposition}
 
\newtheorem{definition}{Definition}

\theoremstyle{remark} 

\usepackage{tikz}

\DeclareMathOperator{\proj}{proj}

\makeatletter
\newcommand{\pushright}[1]{\ifmeasuring@#1\else\omit\hfill$\displaystyle#1$\fi\ignorespaces}
\newcommand{\pushleft}[1]{\ifmeasuring@#1\else\omit$\displaystyle#1$\hfill\fi\ignorespaces}
\makeatother

\newcolumntype{C}[1]{>{\centering\arraybackslash}p{#1}} 

\newcounter{box}

\begin{document}

\begingroup
\allowdisplaybreaks

\maketitle

\begin{abstract}
We formulate a method to co-optimize power system capacity planning decisions and policy investments that shape electricity load patterns.
To this end, we leverage a gradient-based solution technique that enables the efficient solution of operation-aware planning models.
To compute gradients with respect to the conditions that define daily electricity demand profiles, we introduce and formalize the concept of differentiable scenario generation and show that generative machine learning models satisfy the mathematical requirements needed to compute consistent gradients.
We demonstrate the feasibility of the proposed approach through numerical experiments using a diffusion model–based scenario generator and a stylized generation and capacity expansion planning model.
\end{abstract}


\section{Introduction}

Electric load patterns are changing due to ongoing electrification efforts, the deployment of behind-the-meter energy resources, and the adoption of energy management systems~\cite{crozier2025distribution}.
As a result, investments in electric power infrastructure are necessary to ensure that power supply remains secure and affordable.
Traditional power system planning approaches consider load patterns as external scenarios that are simulated using predefined socio-technical assumptions, for example, the levels of building electrification or  electric vehicle adoption \cite{koltsaklis2018state}. 
These properties, however, are not arbitrary but can be influenced by policies like subsidies or tax credits \cite{penttinen2022regulatory}.
To date, there exists no method to co-optimize investments in such policies, their costs and their effects on electric load patterns, together with physical investments in generation and transmission assets, increasing the risk of  stranded assets.
This paper closes this gap via load scenarios created from conditional and differentiable scenario generation models that enable integrated policy and investment planning.

It is clear that grid investments are necessary to accommodate future electric loads \cite{cole2021quantifying,jenkins2021mission}.
Energy planning studies such as \cite{cole2021quantifying,jenkins2021mission} use pre-defined policy and socio-technical assumptions to create scenario sets that determine various investment pathways in generation and transmission. 
Analyses on investments in policy instruments that may shape these scenario sets exist largely in parallel to the energy planning and modeling community. However, it is also clear that energy technology adoption, and hence load patterns, can be influenced via policy and public investments \cite{penttinen2022regulatory,nadel2019electrification,li2019evolutionary}.

The compound effect of policy, technology adoption, consumer behavior, and environmental factors (e.g., temperature) on load patterns is a complex and stochastic process. However, modern computational tools allow simulating multiple thousand devices, households, or buildings and aggregate their (net-)load patterns at the desired spatial and temporal resolution \cite{mai2018electrification,crozier2025distribution,Priyadarshan2024EDGEi,earle2023impact,nazemi2025active}.
Integrating such detailed simulations into energy system planning models \textit{and} optimizing over the simulation parameters, however, is hopeless because the resulting model will be computationally intractable. 

Complementary to simulating load patterns, \textit{generative models} have emerged as a means to create time series data that can serve as an input to power and energy models. For example \cite{chen2018model,liang2022operation} propose conditional generative adversarial models. More recently, the work in \cite{lin2025energydiff} proposed the use of diffusion models. Options for energy system data synthesis via generative models were also discussed in a recent EPRI report~\cite{epri_synthetic_customer_load}.
Such generative models are typically based on a neural network architecture and are trained via gradient descent algorithms \cite{epri_synthetic_customer_load}.
This means, the generative load models are \textit{differentiable} both in terms of their parameters, but also in terms of their \textit{conditions}, i.e., the input that is given to such models to generate specific scenarios, for example season, day of the week, or temperature. 
This differentiability opens a pathway to iteratively tune the conditions of the scenario generator with the parameters of a planning model that is also solved via a gradient descent method, e.g., as proposed in \cite{degleris5169721gradient,ghazanfariharandi2026uncertaintyaware}.

This paper explores this possibility of co-optimizing conditions that shape load patterns and investments in power infrastructure. We make the following contributions:
\begin{enumerate}[leftmargin=*]
    \item We modify the operation-aware grid planning problem from \cite{degleris5169721gradient} such that it can co-optimize grid \textit{and} policy decisions that shape load scenarios via a gradient descent solution method.

    \item We formalize the concept of differentiable scenario generation and show that the resulting gradients are suitable to solve the target problem. 

    \item We describe and implement a proof-of-concept pipeline that simulates policy-dependent load scenarios, trains a suitable differentiable scenario generator, and applies these scenarios in a power system planning model.
\end{enumerate}

\section{Problem Formulation}

We consider a planner with the goal of future-proofing an electric power system. As common in power system planning, the planner seeks to optimize an objective $J(\eta)$ by making investment decisions $\eta$ within a set $\mathcal{I}$ of feasible investments. 
Objective $J$ and planning space $\mathcal{I}$ may encode emission targets, production cost targets, electrification targets, or limits on the potential for grid and generation expansion.

Given a planning decision $\eta$, the power system will pursue its day-to-day operations. Typically, given a set of available resources and an electricity demand $d$, the power system operator solves a cost-minimization problem of the form 
\begin{align}
    x^*(\eta, d) \in \arg\min_x \Big\{ c(x) : x\in\mathcal{X}(\eta, d)  \Big\}. 
\label{eq:operations_problem}
\end{align}
Decisions $x$ are the operational decisions, for example generator dispatches. Objective function $c$ captures the cost of running the system, for example generator fuel cost. 
The feasible space $\mathcal{X}$ is the feasible operational space of the system defined by generator capacities, complex generator constraints (e.g., ramping), power flow, transmission line capacities, and the requirement to balance generation and demand. 
We note that this \textit{operations problem} \eqref{eq:operations_problem} may be more complex in practice and entail multi-stage dispatch processes. The general formulation in $\mathcal{X}$ allows for these complexities by defining $\mathcal{X}$ as a set that depends on other optimization problems, but we assume single-stage operations moving forward.

Optimal planning subject to the operations becomes:
\begin{subequations}
\begin{align}
\min_{\eta\in\mathcal{I}} \quad 
    & J(\eta) \coloneqq \gamma_I^{\top} \eta + \mathbb{E}_{d\sim\mathbb{D}}\big[h\big(x^*(\eta, d)\big)\big] \label{eq:general_planning_objective}\\
\text{s.t.}\quad 
    & x^* \in \arg\min_x \Big\{ c(x) : x\in\mathcal{X}(\eta, d)  \Big\}, \label{eq:general_planning_operations}
\end{align}%
\label{eq:general_planing_model}%
\end{subequations}%
where function $h$ maps the operational decisions into the objective of the planner. 
This allows to capture the possibility that the planner is not strictly interested in cost-minimal dispatches but, for example, also values reliability, certain power flow patterns, or certain technology mixes in the dispatch. 

The expectation in \eqref{eq:general_planning_objective} is taken over the possible realizations of (net-)demand $d$ which we assume to follow a distribution $\mathbb{D}$.
Further, we model investment cost via linear cost coefficients $\gamma_I$.
We acknowledge that assuming linear investment costs may be restrictive, especially for discrete (``lumpy'') investment decisions. However, we highlight that additional complexity can be added here via piecewise linear functions or higher complexity $\mathcal{I}$.
For the purpose and exposition of this paper, linear investment costs are pertinent.

Traditional planning models focus on investments in physical assets, such as siting and sizing of new generation capacity, upgrading and hardening existing assets, and developing new transmission corridors. 
Commonly, these \textit{tangible} investments $\eta$ are optimized as a \textit{reaction} to the net-demand distribution $\mathbb{D}$. 
In other words, future net-demand patterns that inform the optimal choice of $\eta$ are considered external to the optimal planning model. 
Such patterns include emerging large loads such as data centers and, more critically, aggregated consumer behavior that results from changes in consumer technology. For example, electric vehicle charging, electric heat pumps, or electrified commercial buildings in combination with any energy management technology such as flexible smart charging or smart building energy management. 

It is clear that the economy-wide adoption of such load-shaping technologies is a distributed decision made by the individual consumers. However, this decision \textit{can} be shaped by policy decisions and \textit{intangible investments}, such as tax credits, subsidies, or specialized tariffs, that incentivize or discourage the adoption of certain technologies at the consumer level. 
We capture such policies in a vector $\pi$ and make the following subtle, but impactful, alteration to \eqref{eq:general_planing_model} that explicitly models the planner's ability to impact $\mathbb{D}$ via policy:
\allowdisplaybreaks
{
\setlength{\fboxsep}{0pt}
\begin{subequations}
\begin{empheq}[box=\colorbox{shadecolor}]{align}
\min_{\eta\in\mathcal{I}, \pi \in \mathcal{P}}  
    & J(\eta,\!\pi)\! \coloneqq\! \gamma_I^{\top} \eta \!+ \!\gamma_P^{\top} \pi\! + \!\mathbb{E}_{d\sim\mathbb{D}(\pi)}\big[h\big(x^*\!(\eta, d)\big)\big] \label{eq:da_planning_objective}\\
\text{s.t.}\  
    & 
    x^* \in \arg\min_x \Big\{ c(x) : x\in\mathcal{X}(\eta, d)  \Big\}
\end{empheq}%
\label{eq:demand_aware_planing_model}%
\end{subequations}}%
where, similar to cost related to $\eta$, we also assume a linear policy cost function with coefficients $\gamma_P$. 
The policy space $\mathcal{P}$ captures any constraints on the policy vector $\pi$.
Most notably, the net-demand distribution $\mathbb{D}(\pi)$ now depends on $\pi$, highlighting the impact of policy on demand patterns. 
In the following section, we will outline how problem \eqref{eq:demand_aware_planing_model} can be solved using a first-order gradient descent method. 

We note the the problem structure of \cref{eq:demand_aware_planing_model} is related to problems with decision-dependent uncertainty and in particular resembles \textit{performative prediction} \cite{perdomo2020performative,hardt2025performative,piliouras2023multi,mendler2020stochastic}, in particular the idea of using a functional relationship between model predictions and the data distribution \cite{mendler2022anticipating}.
In contrast to \cite{perdomo2020performative,hardt2025performative,piliouras2023multi,mendler2020stochastic,mendler2022anticipating}, however, we are not interested in predictive accuracy but assume to have knowledge about $\mathbb{D}(\pi)$ as we discuss below.

\section{Solution approach}

The problem in \eqref{eq:demand_aware_planing_model} cannot be solved directly. We need to deal with the complication of the expected value in objective \cref{eq:da_planning_objective} and the inner operations problem.

\subsection{Sample average formulation}

We first deal with the expectation operator as follows. 
Assume that we have access to a scenario generator $D_{\theta}(\pi, z, \epsilon)$ that takes as input the policy vector $\pi$, some additional \textit{context} $z$, for example, a temperature profile, and some random noise $\epsilon$.
Vector $\theta$ collects the parameters of the scenario generator.
In the following we may drop writing some of the explicit dependencies of $D$ on $\theta$, $z$, and $\epsilon$ and mostly write $D(\pi)$.  
Let $\hat{\mathbb{D}}^{N}_{\pi}$ be the empirical distribution supported by $N$ samples generated from $D(\pi)$.
For a sufficiently high number of samples $N$ the scenario generator should achieve $\hat{\mathbb{D}}^{N}_{\pi} \approx \mathbb{D}(\pi)$.
Given such a generator we denote $d_{\pi}^{(i)}$ as a scenario drawn from $D(\pi)$ and write the sample average version of \eqref{eq:demand_aware_planing_model} as:
\begin{subequations}
\begin{align}
\min_{\eta\in\mathcal{I}, \pi \in \mathcal{P}} \ 
    & \widehat{J}(\eta, \pi)\! \coloneqq\! \gamma_I^{\top} \eta\! +\! \gamma_P^{\top} \pi\! +\! \frac{1}{N}\sum_{i=1}^N h\big(x^*(\eta, d_{\pi}^{(i)})\big) \label{eq:saa_planning_objective}\\
\text{s.t.}\ 
    &  x^*(\eta, d_{\pi}^{(i)})\! \in \!\arg\min_x \Big\{\! c(x): x\!\in\!\mathcal{X}\big(\eta, d_{\pi}^{(i)}\big) \! \Big\}.
\end{align}%
\label{eq:saa_planing_model}%
\end{subequations}%
\vspace{-2em}

\subsection{Gradient Computations}
\label{ssec:gradient_computations}

Next, we tackle the bilevel structure that remains in \eqref{eq:saa_planing_model} by leveraging established results from implicit differentiation over the inner operations problem to formulate a stochastic gradient descent algorithm.
Specifically, we can write the gradients of the integrated investment and policy objective of \eqref{eq:saa_planing_model} as:
\begin{align}
    \nabla_{\eta} J(\eta, \pi) &= \gamma_I + \frac{1}{N}\sum_{i=1}^N \big(\nabla_\eta x^*_i \cdot \nabla_x h \big) \label{eq:grad_eta}\\
    \nabla_{\pi} J(\eta, \pi) &= \gamma_P + \frac{1}{N}\sum_{i=1}^N \big( \nabla_\pi d^{(i)} \cdot \nabla_d x^*_i \cdot \nabla_x h \big),
    \label{eq:grad_pi}
\end{align}
where we write $x_i^*$ as shorthand for $x^*(\eta, d_{\pi}^{(i)})$.
The gradient $\nabla_x h$ is straightforward to compute assuming that $h$ is differentiable. 
The Jacobian $\nabla_\eta x^*_i$ can be computed using modern approaches from implicit function differentiation.
Implicit differentiation has been discussed comprehensively, e.g., in \cite{amos2017optnet,besanccon2024flexible}. Energy applications are discussed, e.g., in \cite{degleris5169721gradient,donti2017task}. 
We refer the reader to these references for details on the theory and only repeat the intuition of how to obtain $\nabla_\eta x^*_i$ here. 
In fact, a formulation for $\nabla_\eta x^*_i$ from an optimal primal-dual solution of the operations problem can be derived from the problem's KKT system using classic results from sensitivity analysis~\cite{castillo2007closed}. 
The strength of modern differentiable optimization comes from the realization that instead of first computing $\nabla_\eta x^*_i$ and \textit{then} multiplying it with $\nabla_x h$, we can directly solve the KKT system for $(\nabla_\eta x^*_i \cdot \nabla_x h)$. 
This way, it is possible to exploit the structure of the KKT system to simplify obtaining necessary derivatives by backpropagating through the problem solution via automatic differentiation \cite{amos2017optnet,besanccon2024flexible}.

We note that the effectiveness of implicit differentiation depends on the properties of the inner operations problem. 
Usually, strongly convex problems like quadratic or conic problems work best as they provide unique optimal solutions. However, our experiments with linear $c$ and $\mathcal{X}$ were successful.

For \eqref{eq:grad_pi}, we need the Jacobian $\nabla_d x_i^*$.
Generally, this Jacobian will be easier to compute than $\nabla_\eta x^*_i$ because the net-demand usually appears on the right-hand side of the constraints. This allows re-using the KKT factorization from the optimal solution \cite{pacaud2025sensitivity}. As a result, right-hand side derivatives can be computed virtually for free from an optimal primal-dual solution.
If, however, $d$ also appears as a parameter in the constraint matrix, the same argument holds as for $\nabla_\eta x^*_i$.

Finally, computing a gradient $\nabla_\pi d$ implies a functional relationship between $\pi$ and $d$. So far, we have only established this relationship for the simulator $D(\pi)$. 
To close the connection between the demand simulator and the differentiability of a load scenario $d_{\pi}^{(i)}$ we introduce the following definition:

\begin{definition}[Differentiable Scenario Generator]\label{def:diff_scen_gen}
We call a function $D_{\theta}(\pi, z, \epsilon)$ a \emph{differentiable scenario generator} if it has the following properties. 
\begin{enumerate}[leftmargin=*]
    \item Let $\pi^{\star}$ and $z^{\star}$ be given fixed parameters and let $\epsilon$ be a  random variable $\epsilon$ with fixed distribution $\mathcal{Q}$, e.g., $\mathcal{Q}=\mathcal{N}(0,I)$.
    Further, for a set of samples $\{\epsilon^{(i)}\}_{i=1}^N$ from $\epsilon$, write $d_{\pi^{\star}}^{(i)} = D_{\theta}(\pi^{\star}, z^{\star}, \epsilon^{(i)})$ and let $\{d_{\pi^{\star}}^{(i)}\}$ be the set of corresponding scenarios. For sufficiently large $N$ the empirical distribution $\hat{\mathbb{D}}_{\pi^\star}^N \coloneqq \frac{1}{N}\sum_{i=1}^N \delta_{d_{\pi^{\star}}^{(i)}}$ achieves $\hat{\mathbb{D}}_{\pi^\star}^N \approx \mathbb{D}(\pi^{\star})$, i.e., similarity to a target distribution. 

    \item For a given fixed sample  $\epsilon^{\star}$ of $\epsilon$ the gradient $\nabla_{\pi^{\star}} D_{\theta}(\pi^{\star}, z^{\star}, \epsilon^{\star})$ exists and is computable.
\end{enumerate}
We define  $\nabla_{\pi} d_{\pi}^{(i)} \coloneqq \nabla_{\pi} D_{\theta}(\pi, z, \epsilon^{(i)})$ for a given sample $\epsilon^{(i)}$.
\end{definition}

We note that Definition~\ref{def:diff_scen_gen} implies that, besides the random input $\epsilon$, the scenario generator is a deterministic function, i.e., for fixed $\pi^{\star}$ and $z^{\star}$ we have that $D(\pi^{\star}, z^{\star}, \epsilon) = D(\pi^{\star}, z^{\star}, \epsilon')$ for any $\epsilon = \epsilon'$.
These properties are common in modern generative models as we show and discuss in Section~\ref{sec:numerical_experiments} below.

\subsection{Stochastic Gradients}

We first argue that taking the scenario sample gradients is indeed a valid estimator for the change of expected cost with regard to the distributional shift resulting from a change in $\pi$.
\begin{proposition}
    Write $f(d) = h(x(\eta, d))$ as a shorthand.
    For a given differentiable scenario generator $D(\pi, \epsilon)$ (as per Definition~\ref{def:diff_scen_gen}) the gradient  $\frac{1}{N}\sum_{i=1}^N \nabla_{\pi}f\big(D(\pi, \epsilon^{(i)})\big)$ is an unbiased estimator of $\nabla_{\pi} \mathbb{E}_{d\sim \mathbb{D}(\pi)}\big[f(d)\big]$. 
\end{proposition}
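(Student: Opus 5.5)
This is the reparameterization-trick argument. We identify $\mathbb{D}(\pi)$ with the pushforward of $\mathcal{Q}$ under $\epsilon\mapsto D(\pi,\epsilon)$; this is the idealized reading of property 1 of Definition~\ref{def:diff_scen_gen}, with the ``$\approx$'' treated as equality. Then, by the change-of-variables formula for pushforward measures,
\begin{align*}
\mathbb{E}_{d\sim\mathbb{D}(\pi)}\big[f(d)\big] = \mathbb{E}_{\epsilon\sim\mathcal{Q}}\big[f\big(D(\pi,\epsilon)\big)\big].
\end{align*}
The key point is that the distribution $\mathcal{Q}$ no longer depends on $\pi$. All of the $\pi$-dependence now sits inside the integrand, which is deterministic in $\pi$ for each fixed $\epsilon$.

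Next we move the gradient inside the expectation:
\begin{align*}
\nabla_\pi \mathbb{E}_{\epsilon\sim\mathcal{Q}}\big[f(D(\pi,\epsilon))\big] = \mathbb{E}_{\epsilon\sim\mathcal{Q}}\big[\nabla_\pi f(D(\pi,\epsilon))\big].
\end{align*}
This is the main obstacle. Definition~\ref{def:diff_scen_gen} only gives pointwise differentiability of $D$ in $\pi$. The map $f=h\circ x^*(\eta,\cdot)$ must also be differentiable, at least for $\mathcal{Q}$-almost every $\epsilon$; this is implicitly assumed in Section~\ref{ssec:gradient_computations} through the implicit-differentiation discussion. I would justify the interchange by the dominated convergence theorem in the form of the Leibniz integral rule. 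I would try to show that $\epsilon\mapsto f(D(\pi,\epsilon))$ is locally Lipschitz in $\pi$ on a neighborhood of the current policy, with an integrable Lipschitz constant $L(\epsilon)$. Bounded difference quotients then follow from the mean value inequality. Such a bound is plausible for two reasons. First, $D$ is a neural network with Lipschitz activations. Second, for a parametric LP or QP the optimal solution $x^*$ is piecewise linear or Lipschitz in right-hand-side data $d$. Where differentiability fails only on a $\mathcal{Q}$-null set, e.g.\ at degenerate LP bases, the almost-everywhere version of the argument suffices. I would state these regularity conditions explicitly as standing assumptions rather than derive them.

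Finally, take i.i.d.\ samples $\epsilon^{(i)}\sim\mathcal{Q}$. Linearity of expectation gives
\begin{align*}
\mathbb{E}\Big[\tfrac{1}{N}\textstyle\sum_{i=1}^N \nabla_\pi f\big(D(\pi,\epsilon^{(i)})\big)\Big] = \mathbb{E}_{\epsilon\sim\mathcal{Q}}\big[\nabla_\pi f(D(\pi,\epsilon))\big].
\end{align*}
By the previous step, the right-hand side equals $\nabla_\pi \mathbb{E}_{d\sim\mathbb{D}(\pi)}[f(d)]$, which proves unbiasedness. Each summand is computable by the chain rule as $\nabla_\pi d^{(i)}\cdot\nabla_d x_i^*\cdot\nabla_x h$, matching \eqref{eq:grad_pi}. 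Here property 2 of Definition~\ref{def:diff_scen_gen} and the deterministic dependence on $\epsilon$ guarantee that $\nabla_\pi d^{(i)}$ is well defined.
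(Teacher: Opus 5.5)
Your proposal is correct and follows the same reparameterization (pathwise-gradient) argument as the paper: rewrite the expectation over $\mathbb{D}(\pi)$ as an expectation over the $\pi$-independent $\mathcal{Q}$, move $\nabla_\pi$ inside the integral, and average over samples. The paper does the interchange without comment and works with densities $\omega$ and $q$. Your pushforward formulation, the explicit integrable-Lipschitz condition for the interchange, and the linearity-of-expectation step make rigorous what the paper's proof takes for granted.
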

\begin{proof}
The proof follows the argument from \cite[Sec.~5.2]{mohamed2020monte} and requires $D(\pi, \epsilon)$ to be differentiable or, to use standard machine learning language, have a \emph{differentiable sampling path}. This assumption holds as per Definition~\ref{def:diff_scen_gen}.
We introduce $\omega(x, \pi)$ as the density function of $\mathbb{D}(\pi)$ and for this proof use the symbol $x$ for scenarios sampled from $D(\pi, \epsilon)$ to avoid confusion with the differential operator and to remain consistent with most literature on generative models.
Further, we write $q(\epsilon)$ as the density function of the distribution $\mathcal{Q}$ of the random variable $\epsilon$.
We can now write
\begin{align*}
\nabla_{\pi} \mathbb{E}_{x\sim \mathbb{D}(\pi)}\big[f(x)\big]   
    &= \nabla_{\pi} \int \omega(x, \pi) f(x) dx \\
    &= \nabla_{\pi} \int q(\epsilon)f\big(D(\pi, \epsilon)\big)d\epsilon \\
    &= \mathbb{E}_{\epsilon\sim\mathcal{Q}}\Big[ \nabla_{\pi}f\big(D(\pi, \epsilon)\big)) \Big].  
\end{align*}
It is clear that the final expectation can be estimated through $\frac{1}{N}\sum_{i=1}^N \nabla_{\pi}f\big(D(\pi, \epsilon^{(i)})\big)$ which is exactly the structure required for the gradient in \eqref{eq:grad_pi}.
\end{proof}

\subsection{Gradient descent algorithm}
\label{ssec:gradient_descent_algo}

With the gradients available, we can solve \eqref{eq:saa_planing_model} using a stochastic projected gradient descent approach \cite{degleris5169721gradient}. 
For an iteration counter $\tau$ and a batch of $N_{\rm B}$ samples we can  write the stochastic estimate of the gradients from \cref{eq:grad_eta,eq:grad_pi} as
\begin{align}
    \Delta_{\eta}^{(\tau)} &= \gamma_I + \frac{1}{N_{\rm B}}\sum_{i=1}^{N_{\rm B}} \big(\nabla_\eta x^*_i \cdot \nabla_x h \big) \label{eq:stoch_grad_eta}\\
    \Delta_{\pi}^{(\tau)}  &= \gamma_P + \frac{1}{N_{\rm B}}\sum_{i=1}^{N_{\rm B}} \big( \nabla_\pi d^{(i)} \cdot \nabla_d x^*_i \cdot \nabla_x h \big).
    \label{eq:stoch_grad_pi}
\end{align}
For a given learning rate $\lambda$, we can perform the gradient steps
\begin{align}
    \eta^{(\tau+1)} &= \proj_{\mathcal{I}}\big(\eta^{(\tau)} - \lambda \Delta_{\eta}^{(\tau)}\big) \\
    \pi^{(\tau+1)} &= \proj_{\mathcal{P}}\big(\pi^{(\tau)} - \lambda \Delta_{\pi}^{(\tau)}\big),
\end{align}
until a convergence criterion has been reached.
The operator $\proj_{\mathcal{X}}$ denotes the projection onto set $\mathcal{X}$. The projection ensures feasibility of the investments in each step.

\section{Numerical Experiments}
\label{sec:numerical_experiments}

We conduct numerical experiments to establish a proof-of-concept for using differentiable scenario generation. 
For this purpose we study the dependence of daily load patterns as a function of the policy vector $\pi = (\pi^{\rm EV}_{\rm adopt}, \pi^{\rm EV}_{\rm flex}, \pi^{\rm HP}_{\rm adopt}, \pi^{\rm HP}_{\rm eff})$:
\begin{itemize}[leftmargin=*]
    \item $\pi^{\rm EV}_{\rm adopt}\in[0,1]$ describes the level of EV adoption and resulting charging demand.

    \item $\pi^{\rm EV}_{\rm flex}\in[0,1]$ describes the level of adoption of EV smart charging and flexibility technologies and captures if EV charging demand is shifted away from peak demand hours.

    \item $\pi^{\rm HP}_{\rm adopt}\in[0,1]$ describes the level of heat pump (HP) adoption and building electrification. It captures the sensitivity of the load to colder temperatures alongside a general shift of base and peak demand.

    \item $\pi^{\rm HP}_{\rm eff}\in[0,1]$ describes the level of building efficiency and reduces direct temperature sensitivity.
\end{itemize}


\subsection{Simulation setup}
\label{ssec:simulation_setup}

The objective of the simulation is to create a set of sample tuples $\mathcal{S}_M = \{(d^{(m)}, \pi^{(m}, z^{(m)})\}_{m=1}^M$ that we can use to train the parameters $\theta$ of scenario generator $D_{\theta}(\pi, z, \epsilon)$. This simulator can be arbitrarily complex and does not have to meet any requirements on differentiability. 
Detailed models that simulate individual loads and their behavioral, contextual (e.g., temperature-dependent), and otherwise stochastic patterns are  available and well-studied \cite{mai2018electrification,crozier2025distribution,Priyadarshan2024EDGEi,earle2023impact,nazemi2025active}.

For the purpose of this paper we created a simple scenario generator that qualitatively captures the EV and heat-pump driven load patterns reported in \cite{crozier2025distribution}.
As a data basis, we use three years (from 2023, 2024, 2025) of daily 24h load patterns obtained from PJM \cite{pjmdataminer} alongside hourly temperature data at the center of the respective PJM zone. We obtained temperature data from the Copernicus ERA5 reanalysis data~\cite{C3S_ERA_2025}.
We call the real-world load data \textit{base load}. 

We created a total of $M=10,000$ training samples. 
For each $m\in\{1,...,M\}$ we selected a random base load alongside its corresponding temperature time series.
Next, we sampled a random $\pi^{(m)}$ by drawing uniformly from the unit hypercube~$[0,1]^4$.
For the random policy $\pi^{(m)}$ and the temperature profile, which we capture in the context vector $z^{(m)}$, we compute the policy-dependent load profile as follows.

\subsubsection{EV adoption ($\pi^{\rm EV}_{\rm adopt}$)} Increasing EV adoption will add peak load to the base load of up to 75\% of the current yearly peak load \cite{crozier2025distribution}. We simulate this by adding bell curve-shaped load on top of the base load profile centered at the 18-th hour of the day with a standard deviation of 1h. The height of the bell curve is scaled continuously between 0 ($\pi^{\rm EV}_{\rm adopt}=0$) and 0.75 ($\pi^{\rm EV}_{\rm adopt}=1$), given relative to the yearly peak load. For each sample the exact center of the peak and the exact standard deviation of the bell curve is further distorted by multiplication with random noise with standard deviation $0.05$ and $0.1$, respectively.
In addition, $\pi^{\rm EV}_{\rm adopt}$ creates additional uniform peak and off-peak load drawn from $\mathcal{N}(0.1,0.01)$.

\subsubsection{EV flexibility ($\pi^{\rm EV}_{\rm flex}$)} Increased EV flexibility widens the bell curve that models EV load and moves the peak load occurrence away from the peak at the 18-th hour. 
At $\pi^{\rm EV}_{\rm flex}=1$ the standard deviation of the added load is widened to 4h and the peak is moved at most 5h from the 18-th hour.
The same random noise factor as described above applies to the flexibility-scaled load.
We note that any load addition that would occur past midnight is applied to the morning hours of the same day to ensure consistency.

\subsubsection{HP Adoption ($\pi^{\rm HP}_{\rm adopt}$)}
HP adoption adds load as a function of the temperature profile up to an additional peak load of 2 relative to the yearly peak load \cite{crozier2025distribution}.
We compute the electricity demand from heating for each hour of the day by taking the weighted average of the hourly temperature (80\% weight) and the daily mean temperature (20\% weight). The heat demand signal is then computed to rise smoothly whenever the temperature signal $T_t$ for each hour $t$ falls below $T_{\rm th} = 19^\circ\rm{C}$ via $\tau \log\big(1+\exp\big(\frac{T_{\rm th} - T_t}{\tau}\big)\big)$, with $\tau=1.3$.
The resulting heating demand is then scaled with the level of HP adoption $\pi^{\rm HP}_{\rm adopt}$.
We introduce random behavior by drawing a random smoothing factor from $\mathcal{N}(0.05, 0.07)$. 
The smoothing factor computes the convex combination of the hourly heat demand signal and its daily mean. If the smoothing factor is equal to 1, then the signal is equal to its mean. 

\subsubsection{Building efficiency ($\pi^{\rm HP}_{\rm eff}$)}
Increasing building efficiency reduces peak demand from heating and shifts the demand dependency from hourly temperature to daily mean temperature. 
We model this by continuously increasing the mean and standard deviation of the heat demand smoothing factor up to 0.8 and 0.3, respectively, for $\pi^{\rm HP}_{\rm eff}=1$.

Fig.~\ref{fig:scenarios} shows some exemplary simulations for various settings of $\pi$.
We highlight that the central purpose of this simulation setup is \textit{not} to realistically model policy-dependent demand behavior. Better tools are available and should be used for that. 
The purpose of our simulation setup was to create a low-complexity but still meaningful scenario generator to test the feasibility of differentiable scenario generation. 

\subsection{Differentiable scenario generator}
\label{ssec:differentiable_scenario_generator}

\begin{figure}
    \centering
    \includegraphics[width=0.95\linewidth]{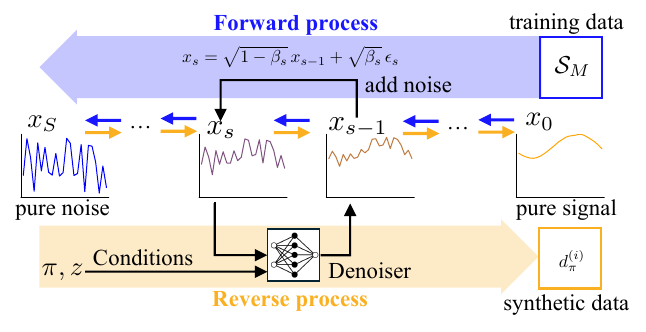}
    \caption{Overview of the forward and reverse diffusion processes. The forward process adds noise over $S$ diffusion steps on the available training data. A denoising model is trained to learn how to remove noise from a given noisy signal using given additional conditions and the current step index $s$. 
    }
    \label{fig:diffusion_overview}
\end{figure}

We use a diffusion model (DM) as an effective way to train a differentiable scenario generator. 
We outline the relevant mechanics of the DM below and also refer to \cite{lin2025energydiff,yang2024survey} for more details on using DMs for time series synthesis.

The fundamental idea of a DM is to train a \textit{denoiser model} to effectively remove noise from a noisy signal until all that is left is signal. See Fig.~\ref{fig:diffusion_overview} for an overview.
The noise removal is step-wise over a number of $s=0,...,S$ steps.
To this end, a \textit{forward process} corrupts a given signal $x_s$ via 
    $x_s = \sqrt{1-\beta_s}x_{s-1} + \sqrt{\beta_s} \epsilon_s$,
where $x_0$ is the pure starting signal, $x_S$ is the final corrupted signal close to pure noise, and $\epsilon_s$ is random gaussian noise.
We note that for the discussion in this section we also use symbol $x$ for the time series vector to remain in line with standard DM language.
Parameters $\beta_s$ define the diffusion \textit{schedule} and are chosen such that noise is not added too quickly in training.

In the reverse process, the denoiser model  takes newly sampled noise $\epsilon$ as an input, alongside conditions $\pi$, $z$, and information on the step number $s$ in the denoising process.
After $S$ applications of the denoiser model on the noisy signal, the final result is a synthetic data sample. 
The reverse process is the same for training and for sampling new scenarios.
Algorithms~\ref{alg:training} and \ref{alg:sampling} in Appendix~\ref{apx:algos} detail the training and sampling procedures, respectively. 
We highlight that our target scenario generator $D_{\theta}(\pi, z, \epsilon)$ is the \textit{entire} sampling process as shown in Algorithm~\ref{alg:sampling}.
This involves applying the trained denoiser model $S$ times. 
In our case study, we use a multi-layer perceptron with a single hidden layer of width 128.

Because our goal is to effectively compute gradients over the sampling condition $\pi$ (i.e., $\nabla_{\pi}d_{\pi}$ as discussed in Section~\ref{ssec:gradient_computations} above), we slightly depart from traditional DM design where new noise is added in each reverse step $s$.
Specifically, we adopt the approach from \cite{song2020denoising} where the reverse process is a deterministic path from the initial noise sample ($x_S$) to the final target scenario ($x_0$).
This way, we can compute well-behaved gradients $\nabla_{\pi}d_{\pi}$ and, as a side-effect, sample faster.


Finally, we note that for our case study we trained the DM on load residuals. This means in addition to the temperature profile, context vector $z$ also contains a base load profile. The DM then learns how this base load profile is altered to reflect the policy-induced patterns. Because we are simulating the load patterns using real-world temperature and load data, this approach was pertinent and provided much better results in the scenario generation than directly learning the load patterns. 
We also highlight that all data was normalized before training and needs to be de-normalized after data generation. This is a standard approach in machine learning pipelines.


\subsection{Differentiable scenarios}

\newcommand{\fw}{0.9}

\begin{figure}
\begin{subfigure}[t]{\linewidth}
    \centering  
    \includegraphics[width=\fw\linewidth]{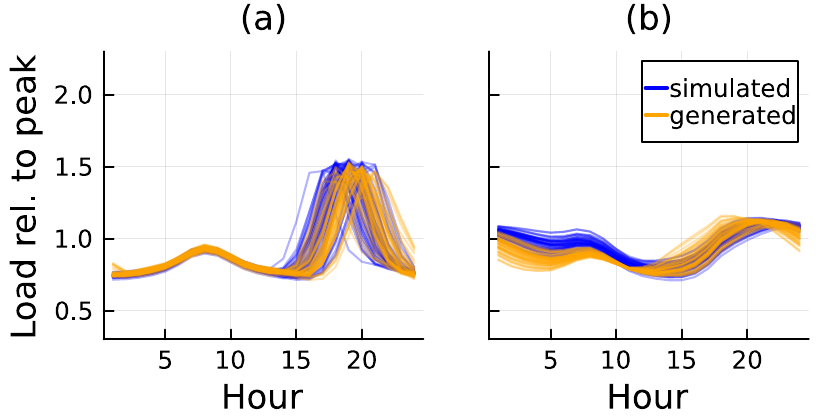}
\end{subfigure}\vspace{0.5em}
\begin{subfigure}[t]{\linewidth}
    \centering  
    \includegraphics[width=\fw\linewidth]{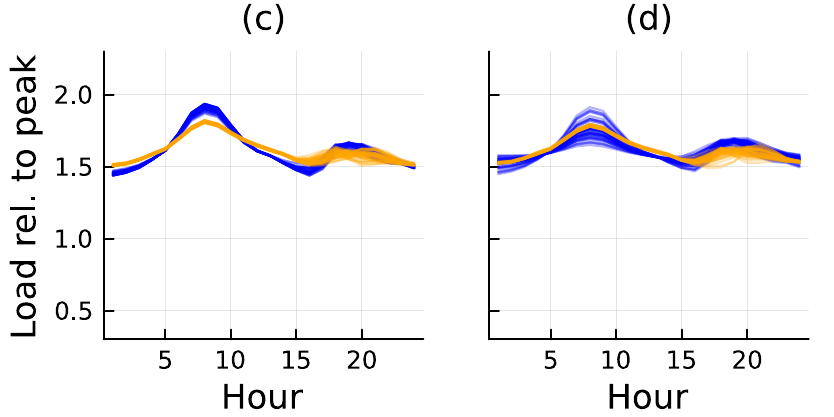}
\end{subfigure}\vspace{0.5em}
\begin{subfigure}[t]{\linewidth}
    \centering  
    \includegraphics[width=\fw\linewidth]{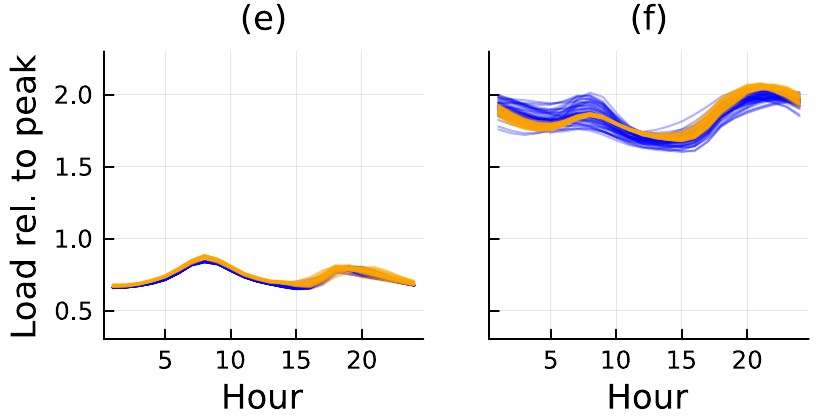}
\end{subfigure}
\caption{
    Simulated and generated scenarios for the same day (i.e., for a given load baseline and winter temperature profile as in Fig.~\ref{fig:grad_HP}) for different policy vectors ($\cdot=0$).
    (a) $\pi=[1,\cdot,\cdot,\cdot]$ (max. EV adoption). 
    (b) $\pi=[1,1,\cdot,\cdot]$ (max. EV adoption with max. flexibility). 
    (c) $\pi=[\cdot,\cdot,\cdot, 1]$ (max. HP adoption).
    (d) $\pi=[\cdot,\cdot, 1, 1]$ (max. HP adoption with max. building efficiency).
    (e) $\pi=[\cdot,\cdot,\cdot,\cdot]$ (baseline).
    (f) $\pi=[1,1,1,1]$ (max. electrification with max. flexibility and efficiency).
    } \label{fig:scenarios}
\end{figure}

\begin{figure}
    \centering
    \includegraphics[width=0.85\linewidth]{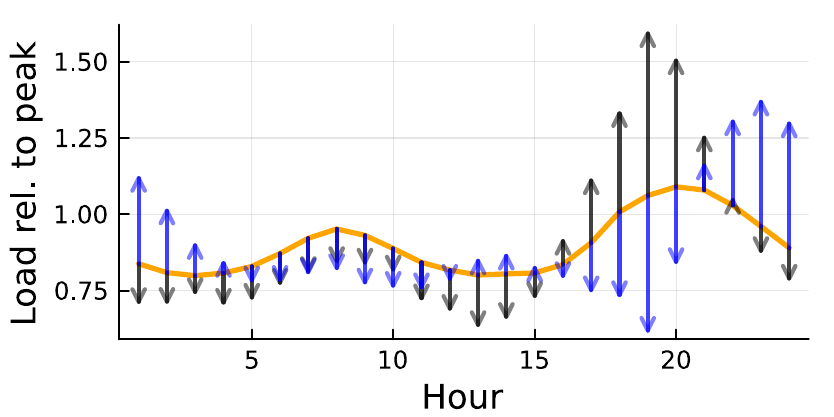}
    \caption{One generated scenario for $\pi=[0.5, 0.5, 0.1, 0.1]$ and the gradients corresponding to the components $\pi^{\rm EV}_{\rm adopt}$ (black arrows) and $\pi^{\rm EV}_{\rm flex}$ (blue arrows). The scenario gradient correctly reflects the simulated behavior that an increase of $\pi^{\rm EV}_{\rm adopt}$ will increase peak demand while an increase of $\pi^{\rm EV}_{\rm flex}$ will reduce peak demand and shift demand towards the night.}
    \label{fig:grad_EV}
\end{figure}

\begin{figure}
    \centering
    \includegraphics[width=\fw\linewidth]{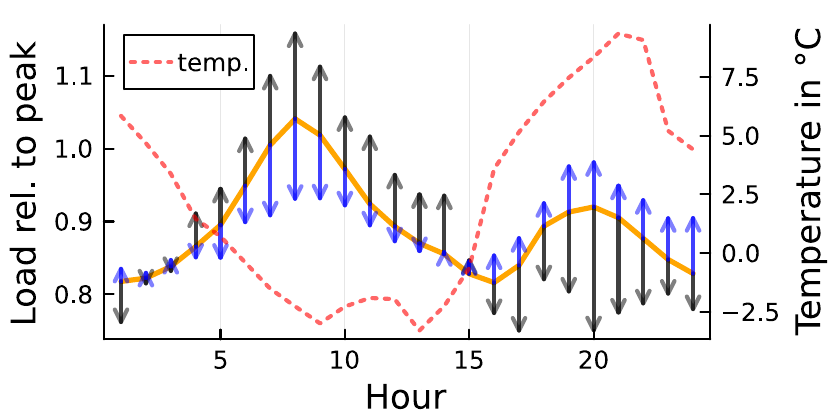}
    \caption{One generated scenario for $\pi=[0.1, 0.1, 0.5, 0.5]$ and the gradients corresponding to the components $\pi^{\rm HP}_{\rm adopt}$ (black arrows) and $\pi_{\rm HP}^{\rm eff}$ (blue arrows) alongside the temperature profile of that day (red dashed line).
    The scenario gradient correctly captures the simulated opposing impact of $\pi^{\rm HP}_{\rm adopt}$ and $\pi_{\rm HP}^{\rm eff}$ on the temperature dependency of the load.
    }
    \label{fig:grad_HP}
\end{figure}

Fig.~\ref{fig:scenarios} shows an overview of some exemplary scenarios, both \textit{simulated} with the simulation setup described in Section~\ref{ssec:simulation_setup} and \textit{generated} with the scenario DM-based scenario generator described in Section~\ref{ssec:differentiable_scenario_generator}.
All plots in Fig.~\ref{fig:scenarios}(a)--(f) have been created for the same day, i.e., the same baseline load and temperature profile. The temperature profile is plotted in Fig.~\ref{fig:grad_HP}. Each subplot shows the simulated and generated scenarios for a different target policy vector. 
We observe that the generated profiles closely match the simulated samples. 
We note that the scenarios in Fig.~\ref{fig:scenarios} are ``out-of-sample'' in the sense that both the simulation and the scenario generation was performed after training with new inputs and new noise.

Figs.~\ref{fig:grad_EV} and \ref{fig:grad_HP} show the gradient of the scenarios with respect to a subset of the components of the policy vector $\pi$. 
We observe that the gradients obtained from this single sampling instance correctly reflect how load is reshaped through $\pi$ in the simulations.
For example, in Fig.~\ref{fig:grad_EV} we see that an increase of EV adoption ($\pi^{\rm EV}_{\rm adopt}$) will increase peak load at the 18-th hour and increase ramping steepness in the hours leading up to the peak hour. 
Likewise, an increase of EV flexibility ($\pi^{\rm EV}_{\rm flex}$) will decrease the load at the peak hour and move load towards the off-peak night hours. 
The analysis of the gradients related to HP adoption and building efficiency ($\pi^{\rm HP}_{\rm adopt}$, $\pi^{\rm HP}_{\rm eff}$) in Fig.~\ref{fig:grad_HP} correctly captures the opposing effect of these two parameters on the temperature sensitivity of the load. 
We highlight again that this information is completely encoded in a single scenario $d_{\pi}^{(i)}$, i.e., in a single parametrization of $D_{\theta}(\pi, z, \epsilon)$.

\subsection{Planning application}
\label{ssec:planning_application}

\begin{figure}
\begin{subfigure}[t]{\linewidth}
    \centering  
    \includegraphics[width=\fw\linewidth]{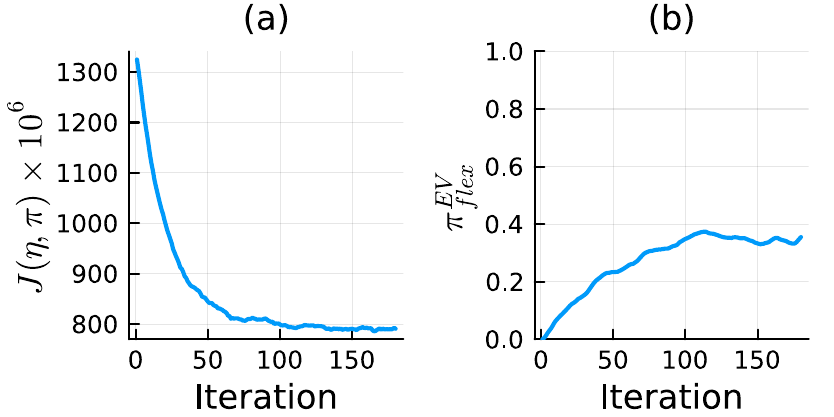}
\end{subfigure}\vspace{0.5em}
\begin{subfigure}[t]{\linewidth}
    \centering  
    \includegraphics[width=\fw\linewidth]{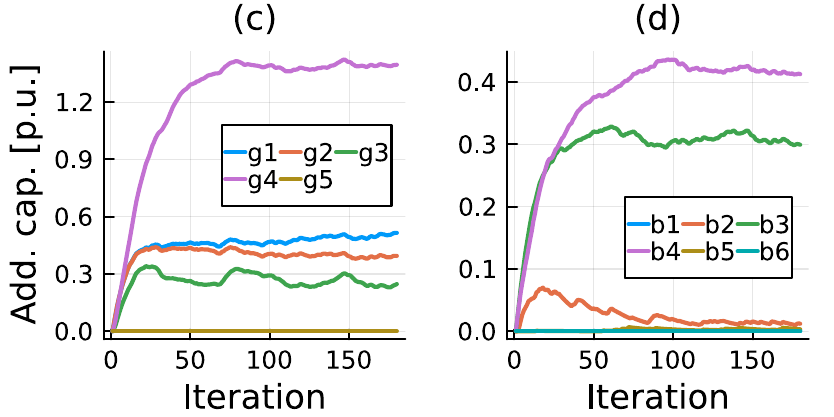}
\end{subfigure}\vspace{0.5em}
\caption{
    Results from solving the planning model in \eqref{eq:detailed_planning_model} via gradient descent (Sec.~\ref{ssec:gradient_descent_algo}).
    Plots show the trajectories of (a) the planning objective, (b) policy $\pi^{\rm EV}_{\rm flex}$, (c) generator capacity additions, and (d) transmission (branch) capacity additions.
    } \label{fig:planning_results}
\end{figure}

We now use our trained scenario generator in a stylized capacity expansion planning model.
The goal is to accommodate 100\% EV adoption ($\pi_{\rm adopt}^{\rm EV}=1$), which requires generation and transmission capacity  investments due to the resulting load increase. 
To this end, the decision maker balances investments in generation capacity and transmission line upgrades and investments in enabling EV flexibility (given by $\pi_{\rm flex}^{\rm EV}$). 
Model \eqref{eq:detailed_planning_model} in Appendix~\ref{apx:detailed_planning_model} shows the detailed formulation.

We use the PJM 5 bus model with the same parametrization as in \cite{mieth2024prescribed} based on the MATPOWER \texttt{case5} data set \cite{matpowercase5}. 
We use the single-period load vector from the dataset and multiply it with the generated load scenarios scaled to the interval~$[0,1.3]$. 
The operation model is run for 24-hour scenarios in batches of 5 days sampled randomly. 
If the available generation and transmission capacities prohibit a feasible load-serving dispatch, generation and transmission constraints are relaxed at a high penalty. By investing in additional capacity, the model can avoid this penalty. 
We set annualized cost of generation and transmission upgrades of \$1m/MW. Achieving $\pi^{\rm EV}_{\rm flex}=1$ costs \$1,100m. We set the learning rate to $\lambda=10^{-9}$ and initial investments and the initial $\pi^{\rm EV}_{\rm flex}$ to zero.

Fig.~\ref{fig:planning_results} shows the results. The model converges after 180 iterations in under one minute. Fig.~\ref{fig:planning_results}(a) shows the planning objective. 
The final optimal level of $\pi^{\rm EV}_{\rm flex}$ is $36\%$ as per Fig.~\ref{fig:planning_results}(b).
Figs.~\ref{fig:planning_results}(c) and (d) show the resulting additional capacity investments for each generator (g1--g5) and each transmission branch (b1--b6).

\subsection{Implementation and code availability}

We implemented all necessary code for our experiments in Julia. For the DM and its training algorithm we used the \texttt{Flux.jl} \cite{Flux.jl-2018} package. For implementing and solving the optimization problem we used the \texttt{JuMP} \cite{DunningHuchetteLubin2017} package and for differentiation of the optimization model we used the \texttt{DiffOpt.jl} \cite{besancon2023diffopt} package.
All computations were performed on an Apple MacBook Pro with an Apple M3 Pro processor and 36 GB of memory.

Our data and code are available open-source at:
\begin{center}
    [\textit{\url{github.com/ropes-lab/differentiable_scenario_panning}}].
\end{center}

\section{Conclusion and Discussion}

This paper introduced differentiable load scenarios to co-optimize load shape (e.g., determined by policy) and other planning decisions (e.g., capacity investments) and demonstrated the feasibility of the concept.
To this end, we formulated a suitable planning problem, formalized the concept of a differentiable scenario generation, and discussed necessary mathematical properties. 
Our numerical experiments showcased a lightweight simulation and conditional scenario generation pipeline using a diffusion model architecture. 
The scenarios \textit{and} their resulting gradients showed the expected behavior and recovered the complex underlying dependencies between scenario conditions and the load shape itself. 
Finally, we applied the concept to a stylized capacity expansion planning problem and showed the feasibility of the approach. 

We highlight again that this work is a proof-of-concept. More research is needed to develop a framework that allows application to quantitative planning models. 
First, a more detailed simulator should be used. Such simulators are available but still require a carefully crafted data pipeline to lend themselves to training data generation. 
Second, our proposed scenario generator is low complexity. Generative models with architectures that are more carefully tailored for time series have been discussed in the literature \cite{yang2024survey}. A systematic study on the tradeoff between scenario accuracy and differentiability for different model architectures is needed. 
Finally, more research on scalability is required. Our case study solved quickly and was implemented without meaningful optimization or parallel computation and gradient-based planning models have been shown to scale \cite{degleris5169721gradient}. However, more experiments and research are needed to identify optimally scalable formulations and implementations of our approach.



\appendix 

\subsection{Training and sampling algorithms}
\label{apx:algos}

Algorithms~\ref{alg:training} and \ref{alg:sampling} show details on the DM training and the sampling procedure.

\begin{algorithm}[h!]
\caption{Training process}\label{alg:training}
\begin{algorithmic}[1]
\REQUIRE Training dataset $\mathcal{S}_M$, learning rate $\lambda$, schedule~$\{\beta_s\}_{s=1}^S$, batch size $N_{\rm B}$
\STATE Define $\alpha_s = 1-\beta_s$ and $\bar{\alpha}_s=\prod_{j=1}^s \alpha_j$
\STATE Initialize denoiser parameters $\theta$
\WHILE{not converged}
    \STATE Sample minibatch $\{(x^{(i)}, \pi^{(i)}, z^{(i)})\}_{i=1}^B$ from $\mathcal{S}_M$
    \FOR{$i=1,\dots,N_{\rm B}$}
        \STATE Sample random diffusion step $s_i\!\sim\! \mathcal{U}\{1,...,S\}$
        \STATE Sample noise $\epsilon_i \sim \mathcal{N}(0,I)$
        \STATE Forward diffuse:
        $x_{s_i}^{(i)} = \sqrt{\bar{\alpha}_{s_i}}\,x_0^{(i)}\!+\! \sqrt{1-\bar{\alpha}_{s_i}}\,\varepsilon_i$
        \STATE Predict signal:
        $\hat{x}_{0,\theta}^{(i)} = \mu_\theta\!\left(x_{s_i}^{(i)}, \pi^{(i)}, z^{(i)}, s_i\right)$
    \ENDFOR
    \STATE Compute minibatch loss: $\mathcal{L}(\theta)=\frac{1}{B}\sum_{i=1}^B \left\|x_0^{(i)}-\hat{x}_{0,\theta}^{(i)}\right\|_2^2$
    \STATE Update parameters: $\theta \leftarrow \theta - \lambda \nabla_\theta \mathcal{L}(\theta)$
\ENDWHILE
\RETURN trained denoiser $\mu_{\theta}$
\end{algorithmic}
\end{algorithm}

\begin{algorithm}[h!]
\caption{Sampling process with fixed noise}\label{alg:sampling}
\begin{algorithmic}[1]
\REQUIRE Trained denoiser $\mu_\theta$, conditions $(\pi, z)$, noise schedule $\{\beta_s\}_{s=1}^S$
\STATE Define $\alpha_s = 1-\beta_s$ and $\bar{\alpha}_s=\prod_{j=1}^s \alpha_j$
\STATE Draw noise $\epsilon$ where $\epsilon \sim \mathcal{N}(0,I)$
\STATE Initialize $x_S \leftarrow \epsilon$
\FOR{$s = S, S-1, \dots, 1$}
    \STATE Reverse process:
    $\hat{x}_0 = \mu_\theta(x_s, \pi, z, s)$
    \STATE Compute reverse update:
    $x_{s-1} =
    \frac{1}{\sqrt{\alpha_s}}
    \left(
        x_s - \frac{1-\alpha_s}{\sqrt{1-\bar{\alpha}_s}}
        (x_s - \sqrt{\bar{\alpha}_s}\,\hat{x}_{0})
    \right)$
\ENDFOR
\RETURN $x_0$
\end{algorithmic}
\end{algorithm}

\subsection{Detailed planning model}
\label{apx:detailed_planning_model}

The planning model that we used for the experiments in Section~\ref{ssec:planning_application} follows the form of \eqref{eq:saa_planing_model}. In detail:
\begin{subequations}
\begin{align}
\min_{\eta_{\rm G}, \eta_{L}, \pi^{\rm EV}_{\rm flex}}\ & \gamma_{\rm I,G}^{\top}\eta_{\rm G} + \gamma_{\rm I,L}^{\top}\eta_{\rm L} + \gamma_{\rm P}(\pi^{\rm EV}_{\rm flex}) + 365 \frac{1}{N}\sum_{i=1}^Nc(x_i^*) \\
\text{s.t.}\ 
& 0\le \pi^{\rm EV}_{\rm flex} \le 1 \label{eq:pi_lims}\\
& \eta^{\rm G}, \eta^{L} \ge 0 \label{eq:inv_lims}\\
& x_i^* \in \Big\{ 
    \arg\min_{x}\ \sum_{t=1}^{24}\big( c^{\top}p_t + \rho^{\top}_{\rm G}s_{\rm G,t} + \rho^{\top}_{\rm F}s_{\rm F,t} \big) \label{eq:op_objective} \\
    & \qquad\quad  \mathbb{1}^{\top}p_t = \mathbb{1}^{\top}d^{(i)}_{\pi,t}\quad \forall t \label{eq:op_enerbal}\\
    & \qquad\quad p_t \le p^{\rm max} + \eta_{\rm G} + s_{G,t}\quad \forall t \label{eq:op_gen_lims}\\
    & \qquad\quad f_t = B(p_t - d^{(i)}_{\pi,t})\quad \forall t \label{eq:op_dc_pf}\\
    & \qquad\quad |f_t| \le f^{\rm max} + \eta_{\rm L} + s_{F,t}\quad \forall t \Big\}. \label{eq:op_flowlims}
\end{align}%
\label{eq:detailed_planning_model}%
\end{subequations}%
Coefficients $\gamma_{I,G}$, and $\gamma_{I,L}$ are the cost for upgrading generation and transmission capacity.
Cost $\gamma_{P}$ capture the necessary investments to enable EV flexibility.
Operational costs are taken as samples over operation days and scaled up to be comparable with investment cost which are computed as annual payments. 
Constraints \cref{eq:pi_lims,eq:inv_lims} are basic limits on the investments. 
The objective of the operations problem \cref{eq:op_objective} is to minimize the cost of generation plus the cost of exceeding generation and transmission capacities. Cost of generation and penalties for constraint violations are given by $c$ and $\rho_{\rm G}$, $\rho_{\rm F}$ respectively. For each time step $t$ variables $p_t$, $s_{\rm G,t}$, $s_{\rm F,t}$ collect generator production and constraint exceedance.
Constraint \cref{eq:op_enerbal} enforces the balance of production and demand at each time step. Constraint \cref{eq:op_gen_lims} softly enforces generation capacity limits for each generator which are given by the base capacities $p^{\rm max}$ and the investment decision $\eta_{\rm G}$.
Equation \cref{eq:op_dc_pf} computes power flow through a linear mapping (DC power flow via PTDF matrix $B$).
Finally, constraint \cref{eq:op_flowlims} enforces the soft constraint on the power flow limit given by the base capacities $f^{\rm max}$ and the investment decision $\eta_{\rm L}$ 


\section*{AI Disclosure}

\balance

During the preparation of this work, the authors used generative AI tools to support some coding implementation and grammar and spelling.
After any AI use, the authors reviewed and edited the content as needed and take full responsibility for the content of the published article. No AI-generated content or code was used ``as is'' and no AI has been used on the reference section.

\bibliographystyle{ieeetr}
\bibliography{ref_amps}

\endgroup
\end{document}